\newtheorem{theorem}{Theorem}
\newtheorem{definition}{Definition}
\begin{document}

\title{Age of Information in Unreliable Tandem Queues}

\author{\IEEEauthorblockN{
	Muthukrishnan Senthil\,Kumar, {\em Member, IEEE}, 
    Aresh Dadlani, {\em Senior Member, IEEE},  
    and
	Hina Tabassum, {\em Senior Member, IEEE}
    \thanks{M. S.~Kumar is with the Department of Applied Mathematics and Computational Sciences, PSG College of Technology, Coimbatore 641004, India (e-mail: msk.amcs@psgtech.ac.in).}%
    \thanks{A.~Dadlani is with the Department of Mathematics and Computing, Mount Royal University, Calgary, AB, Canada (email: adadlani@mtroyal.ca).}%
    \thanks{H.~Tabassum is with the Department of Electrical Engineering and Computer Science, Lassonde School of Engineering, York University, Toronto, ON, Canada (e-mail: hinat@yorku.ca).}
    }
}

\maketitle

\begin{abstract}
Stringent demands for timely information delivery, driven by the widespread adoption of real-time applications and the Internet of Things, have established the age of information (AoI) as a critical metric for quantifying data freshness. Existing AoI models often assume multi-hop communication networks with fully reliable nodes, which may not accurately capture scenarios involving node transmission failures. This~paper presents an analytical framework for two configurations of tandem queue systems, where status updates generated by~a single sensor are relayed to a destination monitor through~unreliable intermediate nodes. Using the probability generating function, we first derive the sojourn time distribution for an infinite-buffer M/M/1 tandem system with two unreliable nodes. We then extend our analysis to an M/G/1 tandem system~with an arbitrary number of unreliable nodes, employing the supplementary variable technique while assuming that only the first node has an infinite buffer. Numerical results demonstrate the impact of key system parameters on the average AoI in unreliable tandem queues with Markovian and non-Markovian service times. 
\end{abstract}

\begin{IEEEkeywords}
Age of information, unreliable tandem queues, server breakdown, general service time, Internet of Things.
\end{IEEEkeywords}

\raggedbottom
\section{Introduction}
\label{intro}
The rise of real-time applications has reshaped communication network requirements in critical domains like autonomous vehicles, industrial automation, and healthcare monitoring. Traditional metrics such as throughput and latency, while essential, overlook the \textit{freshness} of information at the receiver. To address this, the age of information (AoI) has been introduced as an end-to-end measure to quantify staleness from the perspective of the destination node~\cite{Yates2021}. By definition, a status update packet with timestamp \( u \) has an age of \( t - u \) at any time \( t \geq u \). An update is considered \textit{fresh} when its timestamp equals \( t \), yielding an age of zero. At time \( t \), the age of the most recent update at the destination with timestamp \( u(t) \) is determined by the stochastic process \( \Delta(t) \triangleq t - u(t) \).

Recent research has focused on analyzing and optimizing the average AoI (AAoI) and the peak AoI (PAoI) in queuing models involving sensors sampling data from a physical phenomenon. Early efforts studied AAoI in single-source M/M/1 queues~\cite{Kaul2012}, with subsequent extensions to multiple sources and queuing disciplines~\cite{SenthilKumar2024}. Despite these efforts, the analysis of \textit{multi-hop} wireless network models remains limited but critical for distributed systems and scheduling policies~(\cite{Moradian2021, Tripathi2023}). Tandem queues introduce challenges such as interdependent dynamics, scheduling constraints, and heterogeneous service rates, making AoI analysis crucial for developing~protocols for timely information delivery. 

The study of AoI in multi-hop queuing models has garnered significant attention, with various approaches addressing distinct network configurations. In~\cite{Chiariotti2021}, the authors analyzed the PAoI in a tandem system comprising two interconnected satellite links, while~\cite{Chiariotti2022} established upper and lower bounds on the AAoI under various queuing policies, modeling the communication links as M/M/1 queues. For cache-enabled industrial IoT networks, \cite{ZhengHuiErnest2025} derived the PAoI violation probability, where sensor updates are transmitted to a monitor via a base station (BS) and queried by a cloud server (CS). The authors model the system as a Jackson queueing network, representing the sensors, BS, and CS as interconnected M/M/1 queues with infinite buffer sizes. Aiming to evaluate AAoI,~\cite{Gu2021} examined optimal scheduling policies for resource-constrained multi-hop status update systems by formulating a constrained Markov decision process for a two-hop setting without queuing at intermediate nodes.
Moreover, \cite{Kam2022} explored systems with single- and infinite-capacity tandem queues, employing the stochastic hybrid systems (SHS) framework for the former and a queueing-theoretical approach for the latter. Though effective for small-scale systems, SHS faces scalability challenges as network complexity grows. To address bufferless tandem networks, \cite{sinha2024} adopted a recursive analytical approach to evaluate mean PAoI under~preemptive and non-preemptive policies.

While insightful, previous works~(\cite{Chiariotti2021, Chiariotti2022, ZhengHuiErnest2025, Gu2021, Kam2022, sinha2024}) primarily focus on specific configurations, such as preemptive or infinite capacity queues with reliable transmission, limiting their applicability to broader scenarios where nodes may fail, thus resulting in packet losses~\cite{Kumar2023}. In this paper, we analyze the AAoI in a single-source status update network with \textit{unreliable} tandem queues, deriving closed-form expressions for two system configurations: (i) an unreliable M/M/1 tandem queuing system with infinite-capacity buffers and (ii) an unreliable non-Markovian bufferless tandem queuing system. The former is analyzed using the Laplace–Stieltjes transform (LST) of sojourn time distribution and the probability generating function (PGF) of joint stationary queue length distribution, whereas the latter employs the supplementary variable technique to characterize the sojourn time distribution, with a generalization to arbitrary number of queues in tandem.

\begin{table*}[!t]
\centering
\caption{Balance Equations of Two M/M/1 Nodes in Tandem\vspace{-0.2cm}}
\label{tab1}
\renewcommand{\arraystretch}{1.5}
\begin{tabular}{l p{0.77\columnwidth}}
\hline
\hline
\textbf{State} & \textbf{Balance Equation} \\
\hline
$(0,0,0)$ & $(\lambda+\alpha) q_0(0,0) = \mu_2 q_0(0,1) + \gamma q_1(0,0)$ \\
$(0,0,k), \, k>0$ & $(\lambda+\mu_2+\alpha)q_0(0,k) = \mu_1 q_0(1,k-1) + \mu_2 q_0(0,k+1) + \gamma q_1(0,k)$ \\
$(0,n,k), \, n>0, k \geq 0$ & $(\lambda+\mu_1+\mu_2+\alpha)q_0(n,k) = \lambda q_0(n-1,k) + (1 - \delta_{k,0})\mu_1 q_0(n+1,k-1) + \mu_2 q_0(n,k+1) + \gamma q_1(n,k)$ \\
$(1,n,k), \, n,k>0$ & $(\lambda+\gamma)q_1(n,k) = \alpha q_0(n,k) + \lambda q_1(n-1,k)$ \\
\hline
\end{tabular}

\vspace{0.3em}
\!\!\!\!\!\!\!\!\!\!\!\!\!\!\!\!\!\!\!\!\!\!\!\!\!\!\!\!\!\!\!\!\!\!\!\!\!\!\!\!\!\!\!\!\!\!\!\!\!\!\!\!\!\!\!\!\!\!\!\!\!\!\!\!\!\!\!\!\!\!\!\!\!\!\!\!\!\!\!\!\!\!\small\textit{Note:} \(\delta_{k,0}\) is the Kronecker delta function, equal to 1 when \(k = 0\) and 0 otherwise.
\end{table*}
\section{System Model and Assumptions}
\label{sec2}
We model a multi-hop wireless network as a series of tandem queues, each with a unit capacity, as illustrated in~\figurename{~\ref{figure1}}. Real-time updates are generated by a single-antenna sensor following a Poisson process with rate \(\lambda\) and are forwarded to the queue at the first node. The update packets then traverse the network sequentially, with each queue \(i\) processing them at a service rate \(\mu_i > 0\) before reaching the monitor. The model accounts for potential transmission failures due to server breakdowns. Following the approach in~\cite{Yates2021} and \cite{Kumar2023}, we aim to derive the AAoI, defined as $\Delta = \lim_{T \rightarrow \infty} \frac{1}{T} \int_{0}^{T} \Delta(t) \,dt$.\vspace{0.05cm}

We begin our analysis by examining two unreliable M/M/1 queues with infinite capacity, where the entire tandem network experiences breakdowns at a rate of $\alpha > 0$. Upon failure, the network undergoes repair, with the average repair time given by $1/\gamma$.
\begin{figure}[!t]
	\centering
	\includegraphics[width=0.8\linewidth]{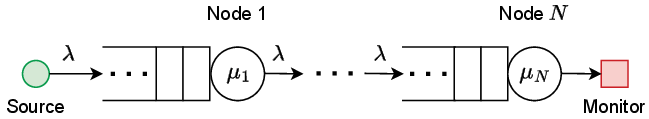}
	\vspace{-0.6cm}
	\caption{The M/M/1 tandem queuing model for \( N \) unreliable nodes.}
	\label{figure1}
\end{figure}

\section{AAoI of Unreliable M/M/1 Tandem Queues}
\label{sec3}
To model a tandem system with \( N = 2 \) unreliable Markovian nodes, we define an irreducible continuous-time Markov chain (CTMC), \( \mathcal{X}(t) = \{(X_0(t), X_1(t), X_2(t)) \mid t \geq 0\} \), where \( X_0(t) \) represents the global operational state of the tandem network, and \( X_1(t) \) and \( X_2(t) \) denote the number of packets at node 1 and node 2, respectively. The network is operational when \( X_0(t) \!=\! 0\) and under repair when \( X_0(t) \!=\! 1 \). The state transition probabilities are given as follows, with the corresponding states and balance equations given in Table~\ref{tab1}:
\begin{equation}
p_{i,n_1,n_2}(t) = \Pr[X_0(t) = i, X_1(t) = n_1, X_2(t) = n_2],  \\
\end{equation}
such that $\lim_{t \to \infty} p_{i,n_1,n_2}(t) = q_i(n_1,n_2)$ for $i \in\{0,1\}$ and $n_1,n_2 \geq 0$. To solve the balance equations in Table~\ref{tab1}, along with the normalization condition \(\sum_{n_1,n_2} [q_0(n_1,n_2) + q_1(n_1,n_2)] = 1\), we employ the PGF approach defined below.
\begin{definition}
\label{def1}
Let $z_1$ and $z_2$ be auxiliary complex variables associated with the number of packets at node~1 and node~2, respectively. The PGF encodes the joint stationary queue length distribution defined as follows, where $|z_1| < 1$ and $|z_2| < 1$:
\begin{eqnarray}
\Pi_i(z_1, z_2) = \sum_{n=0}^{\infty} \sum_{k=0}^{\infty} q_i(n, k) z_1^n z_2^k, \quad i \in \{0,1\}. 
\end{eqnarray}
\end{definition}

From Definition~\ref{def1} and the balance equations in Table~\ref{tab1}, we derive the following system of functional equations, where \( D(z_1, z_2) = z_1 z_2[\lambda(1-z_1) + \gamma] + \mu_1 z_2(z_1 - z_2) + \mu_2 z_1(z_2 - 1) \), \( A(z_1, z_2) = \mu_2 z_1(z_2 - 1) + \mu_1 z_2(z_2 - z_1) \), \( B(z_1, z_2) = -A(z_1, z_2) \), and \( C(z_1, z_2) = \mu_1 z_2(z_2 - z_1) + \mu_2 z_1(z_2 - 1) \):
\begin{subequations}
  \begin{empheq}[left=\empheqlbrace]{align}
    &D(z_1,z_2)\Pi_0(z_1,z_2) = A(z_1,z_2) \Pi_0 (z_1,0) \nonumber \\
    &\qquad\qquad\qquad\qquad\quad\,\, + B(z_1,z_2) \Pi_0(0,z_2) \nonumber \\
    &\qquad\qquad\qquad\qquad\quad\,\, + C(z_1,z_2) \Pi_0(0,0) \nonumber \\
    &\qquad\qquad\qquad\qquad\quad\,\, + \gamma z_1 z_2 \Pi_1(z_1,z_2), \label{eq3a}\\
    &\Pi_1(z_1,z_2) = \frac{\alpha}{\lambda(1-z_1) + \gamma}\Pi_0(z_1,z_2), \label{eq3b}
  \end{empheq}
  \label{eq3}
  \vspace{-0.1cm}
\end{subequations}
By substituting \eqref{eq3b} into \eqref{eq3a} and performing some algebraic manipulations, we obtain:
\begin{align}
    &\Pi_0(z_1,z_2) \left[\frac{\alpha D(z_1,z_2)}{\lambda(1-z_1)+\gamma}-\alpha \gamma z_1 z_2\right] 
    = \nonumber \\
    &\qquad \frac{\alpha}{\lambda(1-z_1)+\gamma} \left[ A(z_1,z_2) \Pi_0 (z_1,0) + B(z_1,z_2)\Pi_0(0,z_2) \right. \nonumber \\
    &\qquad \left. +\, C(z_1,z_2) \Pi_0(0,0) \right].
    \label{eq4}
\end{align}
\begin{theorem}
    The LST of the sojourn time distribution of the proposed system is given by:
    \begin{align}
        &W^*(s) = P\left(1 - \frac{s}{\lambda}\right)  \nonumber \\
           &= \frac{\frac{\alpha}{\lambda(1 - f(1 - \frac{s}{\lambda})) + \gamma} 
           \Big[ C\big(f(1 \!-\! \frac{s}{\lambda}), (1 \!-\! \frac{s}{\lambda})\big) \Pi_0(0,0) \Big]}
           {\frac{\alpha}{\lambda(1 - f(1 - \frac{s}{\lambda})) + \gamma} 
           D\big(f(1 \!-\! \frac{s}{\lambda}), (1 \!-\! \frac{s}{\lambda})\big) 
           \!- \alpha \gamma f(1 \!-\! \frac{s}{\lambda}) (1 \!-\! \frac{s}{\lambda})},\vspace{-0.2em}
    \end{align}
where $f\left(1 - \frac{s}{\lambda}\right) = \mu_1 (1 - \frac{s}{\lambda})^2/\left(\mu_1 + \mu_2 \left(\frac{s}{\lambda}\right)\right)$.
\end{theorem}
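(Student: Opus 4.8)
\emph{Overview.} The plan is to relate the sojourn time $W$ to the joint stationary queue-length PGF via a distributional form of Little's law, and then to extract a closed form from \eqref{eq4} through a substitution that annihilates the unknown boundary functions $\Pi_0(z_1,0)$ and $\Pi_0(0,z_2)$. For the first link, observe that since updates arrive to the tandem as a Poisson process of rate $\lambda$, service at each node is FCFS, and customer order is preserved along a chain of FCFS queues, the number of packets a tagged packet leaves behind in the whole system at the instant it departs node~2 equals exactly the number of Poisson arrivals during its sojourn time $W$. Writing $P(z)$ for the PGF of this number, $P(z)=\Exp\!\big[e^{-\lambda(1-z)W}\big]=W^*\!\big(\lambda(1-z)\big)$, so $W^*(s)=P(1-s/\lambda)$ after the substitution $z=1-s/\lambda$; this is the first line of the asserted identity.

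\emph{Expressing $P(z)$ through $\Pi_0$.} This is the heart of the argument. I would track the tagged packet across the two nodes: the packets it eventually leaves behind in node~1 are those arriving during the time it spends there, while the packets behind it in node~2 are the ones handed off from node~1 ahead of it that have not yet completed their exponential node-2 service; a breakdown, which occurs at rate $\alpha$, freezes \emph{both} nodes simultaneously for an Exp$(\gamma)$ period. Exploiting the memorylessness of the Exp$(\mu_1)$ and Exp$(\mu_2)$ services, a packet sitting at node~1 at the relevant embedded epoch contributes a factor $f(z)=\mu_1 z^2/\big(\mu_1+\mu_2(1-z)\big)$ --- the product of $z^2$ with $\mu_1/\big(\mu_1+\mu_2(1-z)\big)$, the PGF of the geometric number of node-2 completions that fit within one node-1 service --- whereas a packet at node~2 contributes a factor $z$. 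Together with the operational/repair decoupling \eqref{eq3b}, this identifies $P(z)=\Pi_0\big(f(z),z\big)$.

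\emph{Eliminating the boundary functions.} I would then substitute $(z_1,z_2)=(f(z),z)$ into \eqref{eq4}. The reason for choosing $z_1=f(z)$ is that it collapses the two boundary terms: since $B=-A$, their sum equals $A(f(z),z)\big[\Pi_0(f(z),0)-\Pi_0(0,z)\big]$, and the relation defining $f$ forces this to vanish, leaving
\[
\Pi_0\big(f(z),z\big)\!\left[\frac{\alpha D(f(z),z)}{\lambda(1-f(z))+\gamma}-\alpha\gamma f(z)\,z\right]=\frac{\alpha}{\lambda(1-f(z))+\gamma}\,C\big(f(z),z\big)\,\Pi_0(0,0).
\]
Solving for $\Pi_0(f(z),z)=P(z)$ yields the stated rational expression, and replacing $z$ by $1-s/\lambda$ finishes the proof; the constant $\Pi_0(0,0)$ is pinned down separately by $W^*(0)=1$, equivalently by the PGF normalization.

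\emph{Main obstacle.} The delicate step is the second one --- showing that the node-1 generating variable is the service-filtered quantity $f(z)$ rather than $z$ itself. This demands careful bookkeeping of which packets stay behind the tagged packet at each node across the node-1-to-node-2 handoff, and of how the shared breakdown process interleaves with the two sojourn phases. The accompanying algebra --- checking with the explicit $A$, $C$, $D$ that $z_1=f(z)$ really does kill the boundary terms, and that the resulting function is analytic on the unit disk and correctly normalized --- is routine but needed for rigor.
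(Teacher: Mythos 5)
Your proposal is correct and follows essentially the same route as the paper: substituting $z_1=f(z_2)$ into \eqref{eq4} so that $A=B=0$ annihilates the unknown boundary functions, solving for $P(z)=\Pi_0(f(z),z)$, pinning down $\Pi_0(0,0)$ by normalization (the paper takes $z_2\to 1$ with $\Pi_0(1,1)=\gamma/(\alpha+\gamma)$, which is equivalent to your $W^*(0)=1$), and invoking the distributional Little's law to obtain $W^*(s)=P(1-s/\lambda)$. The only substantive difference is that you supply an explicit tagged-packet justification for why $\Pi_0(f(z),z)$ is the PGF relevant to the sojourn time, a step the paper delegates entirely to a citation.
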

\begin{proof}
From \eqref{eq3b}, it is clear that $\Pi_0(1,1) + \Pi_1(1,1) = 1$. This implies that $\Pi_0(1,1) = \gamma/(\alpha+\gamma)$ and $\Pi_1(1,1) = \alpha/(\alpha+\gamma)$. Now, substituting $z_1 = f(z_2) = \mu_1 z_2^2 /\left(\mu_1 + \mu_2 (1 - z_2)\right)$ in \eqref{eq4} yields the marginal PGF of the stationary queue length distribution of node 2 as follows:
\begin{align}
    P(z_2) &= \Pi_0(f(z_2),z_2) \nonumber \\
    &= \frac{\frac{\alpha}{\lambda(1 - f(z_2)) + \gamma} 
    \big[C(f(z_2),z_2) \Pi_0(0,0)\big]}
    {\frac{\alpha}{\lambda(1 - f(z_2)) + \gamma} 
    D(f(z_2),z_2) - \alpha \gamma f(z_2)z_2}\, ,
    \label{eq6}
\end{align}
since \( A(z_1,z_2) \!= B(z_1,z_2) \!= 0 \). From \eqref{eq6}, as \( z_2 \to 1 \), we get:
\begin{equation}
    \Pi_0(0,0) = \frac{\gamma}{\alpha+\gamma} - \lambda \left[\frac{1}{\mu_1}+\frac{1}{\mu_2} \right].
    \label{eq7}
\end{equation}
Equation~\eqref{eq7} represents the stability condition of the unreliable two-node tandem queueing system as:
\begin{equation}
    \lambda \left[\frac{1}{\mu_1}+\frac{1}{\mu_2} \right] < \frac{\gamma}{\alpha+\gamma}.
\end{equation}
Finally, from \eqref{eq6}, the LST of the sojourn time distribution~of the system, as derived using~\cite{peter2006}, is \(  W^*(s) = P\left(1 - \frac{s}{\lambda} \right) \).
\end{proof}
\vspace{0.3em}

Using the invariant relation~\cite{ino2017} among the AoI, PAoI, and system delay distributions, the LST of the AoI distribution is derived to be:
\begin{equation}
    \Delta^*(s) = \frac{\lambda \big[ W^*(s) \!- W^*(s) h^*(s) \!+W^*(s \!+\! \lambda) \frac{s h^*(s)}{s+\lambda} \big]}{s},
    \label{eq9}
\end{equation}
where \( h^*(s) = \left(\frac{\alpha+\gamma}{\gamma}\right)\! \left(\frac{\mu_1}{s+\mu_1}\right) \!\left(\frac{\mu_2}{s+\mu_2}\right) \). Hence, the AAoI is:
\begin{equation}
    \Delta = -\frac{d \Delta^*(s)}{ds} \Big|_{s \to 0}\, .
    \label{eq10}
\end{equation}

Due to the complexity of deriving the sojourn time distribution in an \(N\)-node unreliable tandem queue, we numerically analyze its impact in our simulations. 

\begin{figure}[!t]
	\centering
	\includegraphics[width=0.8\linewidth]{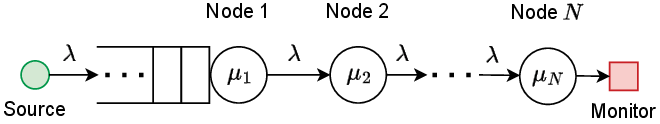}
	\vspace{-0.6cm}
	\caption{The M/G/1 tandem queuing model for \( N \) unreliable nodes, where all \( (N-1) \) nodes but the first are bufferless.}
	\label{figure2}
\end{figure}
\section{AoI of Unreliable M/G/1 Tandem Queues}
\label{sec4}
The unreliable M/G/1 tandem queuing model with bufferless intermediate nodes shown in \figurename{~\ref{figure2}} is analyzed in this section. For \(N=2\), node~1 has an infinite buffer, while the subsequent tandem node is bufferless. Here, the processing time of each packet is an i.i.d. random variable with generally distributed tandem service times \( H_i \), for \( i \in \{1,2\} \),  and corresponding distribution functions \( F_{H_i}(t) \). The LST of \( H_i \) is denoted by \( s_i^*(s) \). When a packet completes service at node one, it moves to node two only if node two is idle. Otherwise, it remains at node one and blocks it until node two becomes available. During this time, no new packet can enter service at node one. Each node \(i\) in the two-node tandem queuing system is subject to breakdowns while serving packets, with a constant failure rate \( \alpha_i = \alpha \). Upon failure, node \(i\) undergoes a repair process with a generally distributed repair time \( R_i \), having distribution function \( F_{R_i}(t) \) and LST \( r_i^*(s) \). We denote the remaining service and repair times at time \( t \) by \( S_i^0(t) \) and \( R_i^0(t) \), respectively. The server state \( X(t) \) is defined as:
\begin{align}
    X(t) &= 
    \begin{cases}
        0, & \text{if the server is idle},  \\
        1, & \text{if node 1 is serving a packet}, \\
        2, & \text{if node 1 has failed and is under repair}, \\
        3, & \text{if node 2 is serving a packet}, \\
        4, & \text{if node 2 has failed and is under repair}.
    \end{cases}
    \label{eq11}
\end{align}

We now define the CTMC representing the system state as \(\mathcal{K}(t) \!=\! \{ (X(t), L(t), S_1^0(t), S_2^0(t), R_1^0(t), R_2^0(t)) \mid t \geq 0 \}\), where \( L(t) \) denotes the number of packets queued at node~1. Based on \eqref{eq11}, the state probabilities are defined as follows, where \( p_0(t) \) is the probability that the server is idle at time \( t \), \( p_{k,n}(x,t)dx \) is the joint probability that node~1 (if \( k=1 \)) or node~2 (if \( k=3 \)) is busy transmitting a packet during the remaining service time \( (x, x+dx) \), and \( p_{j,n}(x,y,t)dy \) is the joint probability that failed node~1 (if \( j=2 \)) or failed node~2 (if \( j=4 \)) is undergoing repair within the remaining repair time \( (y, y+dy) \):
\begin{equation}
\begin{aligned}
    &\left\{
    \begin{aligned}
        &p_0(t) = \Pr[X(t)=0, L(t)=0], \\
        &p_{1,n}(x,t)dx = \Pr[X(t) \!=\! 1, L(t) \!=\! n, x <\! S_1^0(t) \!< x \!+\! dx], \\
        &p_{2,n}(x,y,t)dy = \Pr[X(t)=2, L(t)=n, S_1^0(t)=x,\\
        &\qquad\qquad\qquad\qquad\quad\! y < R_1^0(t) < y+dy], \\
        &p_{3,n}(x,t)dx = \Pr[X(t) \!=\! 3, L(t) \!=\! n, x <\! S_2^0(t) \!< x \!+\! dx], \\
        &p_{4,n}(x,y,t)dy = \Pr[X(t)=4, L(t)=n, S_2^0(t)=x,\\
        &\qquad\qquad\qquad\qquad\quad\! y < R_2^0(t) < y+dy].
    \end{aligned} 
    \right.
\end{aligned}
\label{eq12}
\end{equation}
 Using the supplementary variable technique, we obtain the following balance equations as \( t \to \infty \), where \(\forall i \in \{1,2\}, f_{s_i}(\cdot) \) and \( f_{r_i}(\cdot) \) denote the probability density functions of the \( i\)-th service and repair times, respectively:
\begin{equation}
\begin{aligned}
    &\left\{
    \begin{aligned}
        &\lambda p_0 = p_{1,0}(0) + p_{3,0}(0), \\
        &\frac{dp_{1,0}(x)}{dx} = (\lambda + \alpha_1)p_{1,0}(x) - (\lambda p_0) f_{s_1}(x) - p_{2,0}(x,0), \\
        &\frac{dp_{1,n}(x)}{dx} = (\lambda + \alpha_1)p_{1,n}(x) - \lambda p_{1,n-1}(x) - p_{2,n}(x,0), \\
        &\frac{\partial p_{2,n}(x,y)}{\partial y} \!=\! \lambda p_{2,n}(x,y) \!-\! \lambda p_{2,n\!-1}(x,y) \!-\! \alpha_1 p_{1,n}(x) f_{r_1\!}(y), \\
        &\frac{dp_{3,n}(x)}{dx} = (\lambda + \alpha_2)p_{3,n}(x) - \lambda p_{3,n-1}(x)\\
        &\qquad\qquad\quad - p_{1,n}(0) f_{s_2}(x) - p_{4,n}(x,0), \\
        &\frac{\partial p_{4,n}(x,y)}{\partial y} \!=\! \lambda p_{4,n}(x,y) \!-\! \lambda p_{4,n\!-1}(x,y) \!-\! \alpha_2 p_{3,n}(x) f_{r_2\!}(y),
    \end{aligned}
    \right.
\end{aligned}
\label{eq13}
\end{equation}
The normalizing condition is given by:
\begin{align}
    p_0 &+ \sum_{n=0}^{\infty} \left[ \int_0^{\infty}\!\! \big( p_{1,n}(x) + p_{3,n}(x) \big) dx \right]\qquad \nonumber \\
    &+ \sum_{n=0}^{\infty} \left[ \int_0^{\infty}\!\! \int_0^{\infty}\!\! \big( p_{2,n}(x,y) + p_{4,n}(x,y) \big) dx \, dy \right] = 1. 
    \label{eq14}
\end{align}

\begin{figure*}[t]
    \centering
  \subfloat[AAoI vs. $\lambda$ ($N=2$).\label{fig3a}]{%
       \includegraphics[width=0.3365\linewidth]{./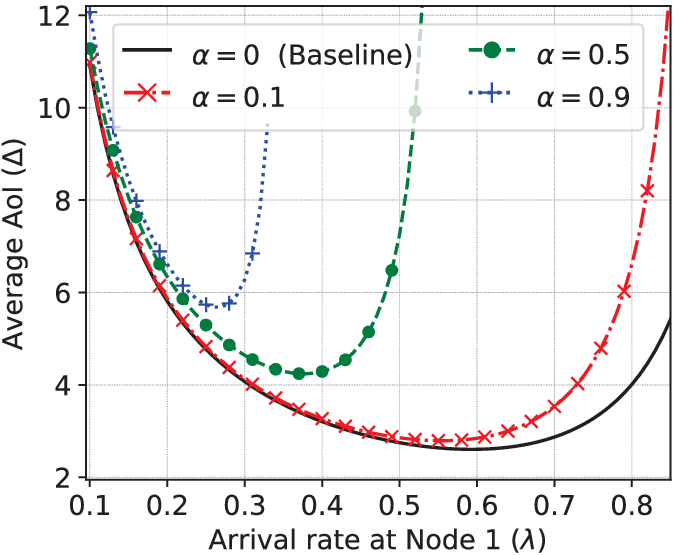}}
    \hfill
  \subfloat[AAoI vs. $\lambda$ ($N=4$).\label{fig3b}]{%
        \includegraphics[width=0.3195\linewidth]{./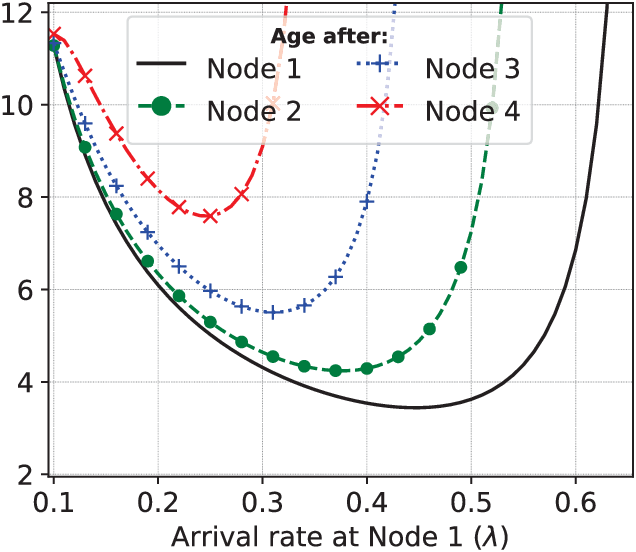}}
        \hfill
    \subfloat[Per-node delay vs. $\lambda$ ($N=4$).\label{fig3c}]{%
       \includegraphics[width=0.343\linewidth]{./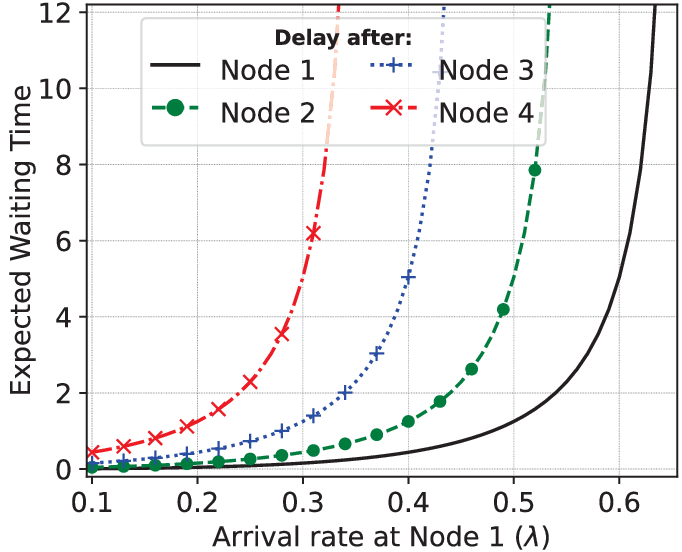}}
    \vspace{0.2em}
  \caption{Average age in an unreliable M/M/1 tandem queue with infinite capacity: (a) $N=2$ with $\mu_1=\mu_2=1$, $\gamma=1$; (b) $N=4$ with $\mu_i=1$, $\forall i \in \{1,2,3,4\}$, $\gamma=1$, $\alpha=0.5$; (c) Expected waiting time for $N=4$ with $\mu_i=1$, $\forall i \in \{1,2,3,4\}$, $\gamma=1$, $\alpha=0.5$.}
  \label{fig3n} 
\end{figure*}
Denoting the LSTs of $p_{k,n}(x)$ and $p_{j,n}(x,y)$ by $p_{k,n}^*(\theta)$ and $p_{j,n}^*(\theta, s)$ for $k \in \{1,3\}$ and $j \in \{2,4\}$ respectively, the marginal PGFs can be expressed as:
\begin{equation}
\begin{aligned}
    &\left\{
    \begin{aligned}
        &p_{k}^*(z,\theta) = \sum_{n=0}^{\infty} p_{k,n}^*(\theta) z^n, \\
        &p_{k}(z,0) = \sum_{n=0}^{\infty} p_{k,n}(0) z^n, \quad k=1,3, \\
        &p_{j}^*(z,\theta,s) = \sum_{n=0}^{\infty} p_{j,n}^*(\theta,s) z^n, \\
        &p_{j}(z,\theta,0) = \sum_{n=0}^{\infty} p_{j,n}(\theta,0) z^n, \quad j=2,4.
    \end{aligned}
    \right.
\end{aligned}
\end{equation}

Applying LST to both sides of \eqref{eq13} and after some algebraic manipulations, the PGF of the system size is obtained as:
\begin{equation}
    P(z) = \frac{h^*(\phi(0,z)) (1 - z) p_0}{h^*(\phi(0,z)) - z},
    \label{eq16}
\end{equation}
where \(\forall i \in \{1,2\}, \alpha_i = \alpha\), \(\phi(s,z) = s + \lambda - \lambda z +~\alpha - \alpha r^*(s+\lambda-\lambda z) \), \( h^*(s) = s_1^*(s) s_2^*(s) \), and \( p_0 = 1 - \lambda \left( 1 + \frac{\alpha}{\gamma} \right) \left( \frac{1}{\mu_1} + \frac{1}{\mu_2} \right) \) when \( r_1^*(s) = r_2^*(s) = r^*(s) \). Using~\eqref{eq16} and the approach in~\cite{peter2006}, the LST of the sojourn time distribution for an unreliable two-node tandem queueing system is given by \( W^*(s) \!=\! P\left(1-\frac{s}{\lambda}\right)\). In general, we have:
\begin{equation}
        W^*(s) = P\left(1 - \frac{s}{\lambda}\right) 
        = \frac{h^*\left(\phi\left(0, 1 - \frac{s}{\lambda}\right)\right)\left(\frac{s}{\lambda}\right)p_0}{h^*\left(\phi\left(0, 1 - \frac{s}{\lambda}\right)\right) - 1 + \frac{s}{\lambda}}\, ,
        \label{eq17}
    \end{equation}
where $h^*(s) = \prod_{i=1}^{k} f_{s_i}(s)$ and $f_{s_i}(s)$ is the \(i\)-th service time distribution. Using~\eqref{eq16}, the LST of the age distribution for \(N\) unreliable tandem queues is derived in the following theorem.
\begin{theorem}
    The LST of the age distribution at any node $i$ is:
    \begin{align}
        \Delta^*(s) = W^*(s) - \frac{s p_0 h^*(s)}{s + \lambda h^*(s+\lambda)}.
    \label{eq18}
\end{align}
\end{theorem}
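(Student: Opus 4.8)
The plan is to obtain $\Delta^*(s)$ by the standard AoI-renewal argument adapted to the tandem queue, using the sample-path decomposition of the age process into a sequence of triangular/wedge-shaped areas between successive \emph{effective} departures at the monitor. First I would set up the notation for the stationary inter-departure cycle: let $T_j$ be the system sojourn time of the $j$-th delivered update (whose LST is $W^*(s)$, established in \eqref{eq17}), let $Y_j$ be the inter-arrival gap that separates the generation epochs of consecutively delivered updates, and recall the single-source, infinite-first-buffer property that the arrival stream is Poisson($\lambda$) so that $Y_j$ is exponential with rate $\lambda$ and is independent of the service/repair dynamics of the update it precedes. The age seen at the monitor just after the $j$-th delivery resets to $T_j$, then grows linearly; the contribution of one cycle to $\int \Delta(t)\,dt$ is the familiar $Y_jT_j + Y_j^2/2$ form, and the renewal-reward / Palm inversion then gives $\Delta^*(s)$ as a ratio of transform expectations over one cycle.

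Next I would express each ingredient in transform form. The term $W^*(s)$ comes directly from \eqref{eq17}. For the coupling term I need the joint transform of the residual work: the quantity $h^*(s) = s_1^*(s)s_2^*(s)$ is the LST of the \emph{pure} two-hop service requirement (no breakdowns), and $p_0 = 1-\lambda(1+\alpha/\gamma)(1/\mu_1+1/\mu_2)$ is the idle probability from \eqref{eq16}; these are exactly the pieces that appear when one writes the probability that a tagged arrival finds the system empty and is delivered after an unimpeded two-hop pass, weighted against the Poisson gap. Concretely, I would evaluate \eqref{eq16} and the balance relations at the shifted argument $s+\lambda$ — i.e. track $P(z)$ along the curve $z = 1-\tfrac{s}{\lambda}$ used to pass from system-size PGF to sojourn-time LST, but displaced by $\lambda$ to account for the extra exponential gap $Y_j$ — which is precisely what produces the denominator $s+\lambda h^*(s+\lambda)$ and the numerator $s\,p_0\,h^*(s)$. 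Assembling the renewal-reward ratio and simplifying, the cross term collapses to $\dfrac{s\,p_0\,h^*(s)}{s+\lambda h^*(s+\lambda)}$, and subtracting it from $W^*(s)$ yields \eqref{eq18}; the ``at any node $i$'' clause follows because an identical decomposition applies with $h^*$ replaced by the partial product $\prod_{m\le i} s_m^*(s)$ and the same bufferless pass-through argument for the downstream hops, leaving the structural form unchanged.

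The main obstacle I anticipate is justifying the independence and the exact exponential form of the separating gap $Y_j$ in the presence of server breakdowns: a delivered update and the next delivered update need not be adjacent arrivals, and breakdowns at node~1 correlate the residual service of the in-service packet with the backlog, so some care is needed to argue that, conditioned on the tagged update being the one in service, the subsequent arrival gap is still a fresh exponential independent of that update's (breakdown-inflated) sojourn. I would handle this by invoking the memorylessness of the Poisson arrival process together with the fact that the first buffer is infinite, so the ``next delivered update'' that matters for the age drop is governed by the arrival that starts the next busy portion, and then verify that the supplementary-variable balance equations \eqref{eq13} indeed give the shifted transform $h^*(\phi(0,\cdot))$ the required product structure at argument $s+\lambda$. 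A secondary, purely mechanical hurdle is the algebra of reducing the renewal-reward ratio to the stated two-term form, but that is bookkeeping once the probabilistic identity $\Exp[e^{-sT}\,\mathbf{1}\{\text{empty-and-pass-through}\}] = p_0 h^*(s+\cdots)$ is in place.
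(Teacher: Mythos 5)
There is a genuine gap, and it sits at the very first step. The decomposition you invoke --- the trapezoid areas $Y_jT_j+Y_j^2/2$ accumulated over delivery cycles and then averaged by renewal--reward --- is the standard route to the \emph{mean} AoI, i.e.\ to $\Exp[\Delta]$, a single number. The theorem asserts the LST of the stationary age \emph{distribution}, and you cannot pass from the cycle areas to $\Delta^*(s)$ by ``Palm inversion'': the area of a cycle is a functional of $(Y_j,T_j)$ that has already integrated out the level information. The paper's proof works with a different object from the start: it decomposes the time average of the \emph{indicator} $\mathbf{1}_{A_t\le x}$ over the update epochs $t_n$, using the sample-path recursion $A_t=X_{n-1}+(t-t_{n-1})$ together with the peak value $\max(X_{n-1},Y_n)+S_n$ attained just before the next reset, and arrives at the distributional identity $A(x)=\lambda\bigl(W(x)-P_1(x)\bigr)$ linking the AoI law to the sojourn-time and peak-AoI laws (the ``invariant relation'' of \cite{ino2017}, already used in \eqref{eq9}); the LST of that identity is \eqref{eq18}. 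If you want to salvage your plan, replace the area functional by either this indicator decomposition or by a direct cycle decomposition of $\int_0^T e^{-sA_t}\,dt$; as written, your first paragraph derives a quantity that determines only the first moment.

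A second, related gap is that the closed form of the subtracted term is asserted by reverse engineering rather than derived. You claim that evaluating $P(z)$ at a ``displaced'' argument ``precisely produces'' the numerator $s\,p_0\,h^*(s)$ and the denominator $s+\lambda h^*(s+\lambda)$, but you never compute the transform of the peak-AoI component $\max(X_{n-1},Y_n)+S_n$, which is where the $s+\lambda$ shift and the $\lambda h^*(s+\lambda)$ in the denominator actually originate (compare the term $W^*(s+\lambda)\,\tfrac{s h^*(s)}{s+\lambda}$ in \eqref{eq9}). Note also that the theorem's second term involves $h^*(s)$ and $h^*(s+\lambda)$ \emph{without} the breakdown-inflated argument $\phi(0,\cdot)$ appearing in \eqref{eq17}, so it is not obtained by sliding $P(z)$ along $z=1-\tfrac{s}{\lambda}$ displaced by $\lambda$; that discrepancy alone shows the claimed mechanism cannot be the right one. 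Your concern about the independence of the exponential gap $Y_j$ from the breakdown-inflated sojourn time is legitimate and worth addressing, but it is secondary to fixing the choice of decomposition.
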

\begin{proof}
    Let \( Y_k \) denote the inter-arrival time between the \(k\)-th and \((k-1)\)-th packets, and \( T_{i,k} \) the transmission time of the \(k\)-th packet at node \(i\). The \(n\)-th packet, generated at time \(t_n\), is served at time \( t_n' \) at queue~1, with its age ending at \( t_n'' \). Here, we assume \(t_0 = 0\). The age reduction equals the difference between the ages of consecutive packets, which represents the inter-arrival time at queue 1. For $N=2$, the accumulated age is \( Y_1 \times T_1 \), where \( T_1 = t_1'' - t_1 \) is the system time at node 2. Node failures~occur at \( r_i \) during service at node \(i\), followed by repairs over \((r_i, r_i')\). Let \( \{A_t\}_{t \geq 0} \) be the AoI process. A sample path of this process is defined by \( \{(t_n, X_n)\}_{n \geq 0} \), where \( X_n = A_{t_n} \) is the AoI immediately after the \(n\)-th update. For $n \geq 1$ and \(t \in (t_{n-1}, t_n)\), \( A_t \triangleq X_{n-1} + (t - t_{n-1}) \). 
    Now, consider \( M_t = \sup\{n \in \{0,1,2,\ldots\} : t_n \leq t\} \). Using \( \lim_{n \to \infty} t_n/n = 1/\lambda, \lambda \in (0, \infty) \), the AoI density for \( x \geq 0 \) is calculated as follows:
    \begin{align}
        &A(x) = \lim_{T \to \infty} \frac{1}{T} \int_0^T \mathbf{1}_{A_t \leq x} \, dt 
             = \lim_{T\rightarrow \infty}\frac{1}{T}\sum_{n=0}^{M_T-1}T_n(x) \nonumber \\
        &=\lim_{T\rightarrow \infty}\frac{1}{T}\! \sum_{n=0}^{M_T-1}\!\! \int_0^x\! \textbf{1}_{X_n\leq u}du -\! \int_0^x\!\! \textbf{1}_{\max(X_{n-1},Y_n)+S_n\leq u} du \nonumber \\
        &= \lambda \left( W(x)-P_1(x) \right).
        \label{eq20}
    \end{align}
    where \(S_n\) denotes the service time of the \(n\)-th packet, \( W(x) \) is the density of the sojourn time distribution, and \( P_1(x) \) is~the density of the PAoI
    Applying LST to~\eqref{eq20} yields \eqref{eq18}, which completes the proof.
\end{proof}
Substituting~\eqref{eq18} into~\eqref{eq10} results in the AAoI at node \(i\) for the general case of unreliable M/G/1 queues in tandem.


\section{Numerical Results and Discussions}
\label{sec5}
In this section, we assess the analytical findings on AAoI for the proposed unreliable tandem queue models. All evaluations are benchmarked against failure-free models ($\alpha = 0$). Unless stated otherwise, the service rate of node \(i\) is set to \(\mu_i = 1\), with a repair rate of \(\gamma =1\).
    


\figurename{~\ref{fig3n}} compares the AAoI for unreliable M/M/1 tandem queues with infinite capacity as a function of the source update arrival rate. The impact of \( \lambda\) and the network unreliability factor \( \alpha\) on the average age for \(N=2\) is shown in~\figurename{~\ref{fig3a}}. The baseline shows a U-shaped trend, where AAoI initially decreases with increasing \( \lambda\) due to more frequent updates but later increases as queueing delays dominate. As \( \alpha\) increases, representing~a higher failure probability, the minimum achievable AAoI \( (\Delta_{min})\) worsens, and the optimal arrival rate \( (\lambda^*) \) shifts leftward, indicating that the system can sustain lower arrival rates before experiencing instability. Moreover, higher \( \alpha\) also accelerates the onset of performance degradation, leading to steeper AAoI increases at moderate to high \( \lambda\) values. \figurename{~\ref{fig3b}} shows the cumulative effect of queuing and service delays~in the unreliable four-node tandem network. We observe that~the AAoI deteriorates progressively across successive nodes, with later nodes exhibiting a higher \( \Delta_{min}\) and an earlier onset~of instability as \( \lambda\) increases. Notably,~\( \Delta_{min}\) increases by approximately \( 118\%\) as the number of nodes increases from 1 to 4. This trend arises because, while moderate arrival~rates initially reduce AAoI, higher traffic levels intensifies congestion by increasing the expected waiting time at each node as shown in \figurename{~\ref{fig3c}}, resulting in significant performance degradation at downstream unreliable nodes with infinite buffer capacities.
\begin{figure*}[t]
    \centering
  \subfloat[AAoI vs. $\lambda$ ($N=2$, Exp. service time).\label{fig4a}]{%
       \includegraphics[width=0.34\linewidth]{./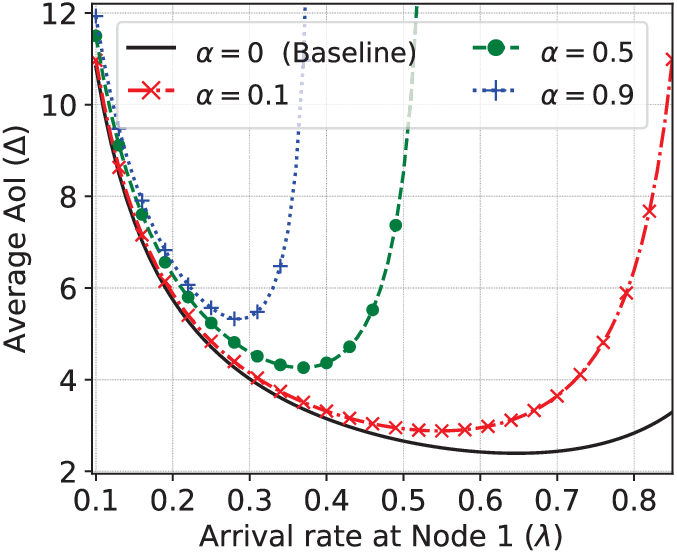}}
    \hfill
  \subfloat[AAoI vs. $\lambda$ ($N=2$, Erl. service time).\label{fig4b}]{%
        \includegraphics[width=0.319\linewidth]{./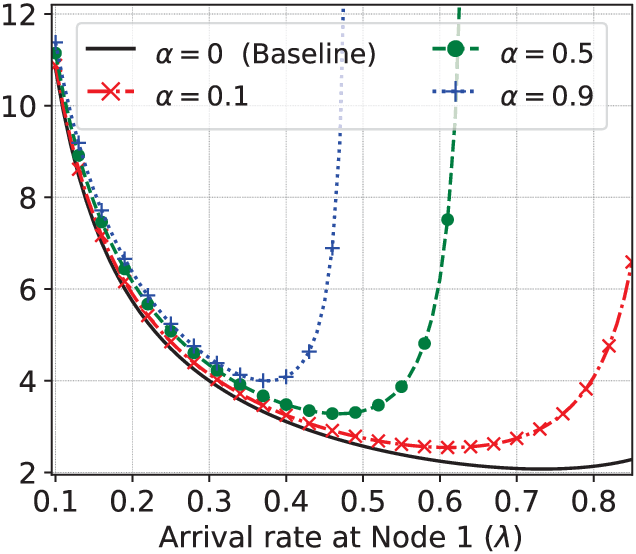}}
        \hfill
    \subfloat[AAoI vs. $\lambda$ ($N=4$, Erl. vs. H$_2$ service times).\label{fig4c}]{%
       \includegraphics[width=0.319\linewidth]{./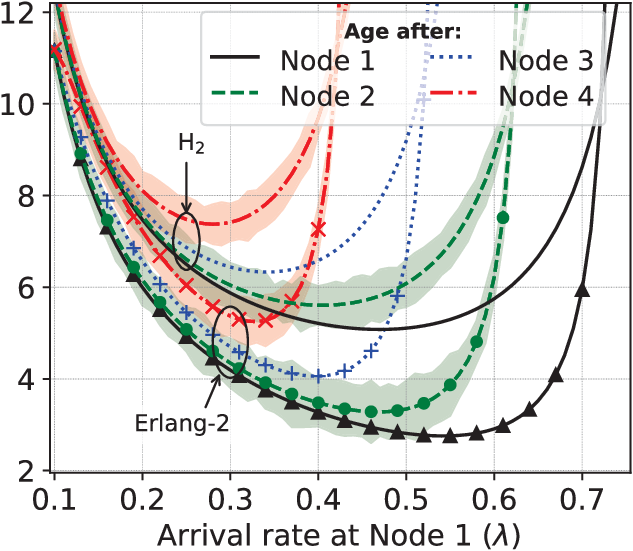}}
    \vspace{0.2em}
  \caption{Average age in an unreliable bufferless M/G/1 tandem queue: (a) $N=2$ with exponential service time distribution where $\mu_1=\mu_2=1$, $\gamma=1$; (b) $N=2$ with Erlang-2 service time distribution where $\mu_1=\mu_2=1$, $\gamma=1$; (c) $N=4$ with Erlang-2 and hyper-exponential of order 2 (H$_2$) service time distributions where $\mu_i=1$, $\alpha_i (=\alpha) = 0.5$, and $\gamma=1$.}
  \label{fig4n} 
\end{figure*}

\figurename{~\ref{fig4n}} presents the AAoI as a function of \(\lambda\) for an unreliable bufferless M/G/1 tandem queue, evaluated under varying but uniform nodal failure rates \(\alpha_i = \alpha\). \figurename{~\ref{fig4a}} shows results~for \(N=2\) with an exponential service time distribution (Exp.), representing the classical M/M/1 queue. In this bufferless setting, update packets move sequentially, with transmissions from node~1 proceeding directly to node~2 without delay. Conversely, in \figurename{~\ref{fig3a}}, where an M/M/1 queue includes a buffer, packets at node~1 must wait for service at node~2,~increasing AAoI due to queuing delays. \figurename{~\ref{fig4b}} examines the impact of adopting the Erlang-2 (Erl.) distribution, which~features a lower coefficient of variation in service times. The key distinction is the reduced AAoI variability, particularly at moderate arrival rates. This is evident in the \(92.8\%\) increase in the minimum average~age as \(\alpha\) rises from $0$ to $0.9$, which remains significantly lower than the \(122.3\%\) increase observed under the Exp. distribution. Finally, \figurename{~\ref{fig4c}} compares the AAoI performance for \(N=4\) between Erl. and hyper-exponential of order 2 (H$_2$) service time distributions, along with the $95\%$ confidence intervals. Unlike H$_2$, Erl. reduces fluctuations, resulting in lower AAoI~at~high arrival rates due to reduced queuing delays. This highlights the advantage of Erl. service in multi-hop systems by improving information freshness through lower variability and a more controlled increase in AAoI at higher arrival rates.

\section{Conclusion}
\label{sec6}
\fontdimen2\font=0.50ex
This paper analyzed the AAoI in unreliable tandem queueing models, considering M/M/1 queues with infinite capacity and M/G/1 queues with bufferless intermediate nodes. Using the PGF approach, closed-form expressions for the sojourn time distribution and AAoI in the M/M/1 case were derived, while the supplementary variable technique extended the analysis to M/G/1 systems with arbitrary unreliable nodes. Numerical results show that service time distribution and node failures significantly impact AAoI, with Erlang-2 service reducing fluctuations compared to the exponential case. Moreover, both tandem queueing models become less stable as \(N\) increases, constraining the minimum achievable AAoI to lower arrival rate ranges. Future work could explore correlated Poisson arrivals from multiple sources and the impact of phase-type service and repair times in compromised tandem queueing systems.

\bibliographystyle{IEEEtran}
\bibliography{IEEEabrv, myref}

\end{document}